\newtheorem{theorem}{Theorem}
\newtheorem{lemma}[theorem]{Lemma}
\newtheorem{conj}{Conjecture}
\title{A note about majority colorings of countable DAGs}
\author{Bartłomiej Bosek$^{*}$}
\thanks{$^{*}$ Research supported by the National Science Center of Poland under grant no. 2020/37/B/ST1/03298.}
\email{bartlomiej.bosek@uj.edu.pl}
\author{Aleksander Katan}
\email{aleksander.a.katan@gmail.com}
\address{Jagiellonian University\\
Faculty of Mathematics and Computer Science\\
Theoretical Computer Science Department\\
ul. {\L}ojasiewicza 6, Kraków, Poland}
\begin{document}

\begin{abstract}
A majority coloring of an undirected graph is a vertex coloring in which for each vertex there are at least as many bi-chromatic edges containing that vertex as monochromatic ones. It is known that for every countable graph a majority 3-coloring always exists. The Unfriendly Partition Conjecture states that every countable graph admits a majority 2-coloring. Since the 3-coloring result extends to countable DAGs, a variant of the conjecture states that 2 colors are enough to majority color every countable DAG. We show that this is false by presenting a DAG for which 3 colors are necessary.

Presented construction is strongly based on a StackExchange conversation \cite{stackLabellings} regarding labellings of infinite graphs. 

\end{abstract}

\maketitle

\section{Introduction}

Given a finite simple directed graph $G=(V, E)$, a coloring $\phi : V(G) \rightarrow \mathbb{N}$ is called a majority coloring if for every vertex $v \in V(G)$

$$\#\{vu \in E(G) : \phi(u) \neq \phi(v)\} \geqslant \#\{vu \in E(G) : \phi(u) = \phi(v)\}.$$

In other words, for every vertex $v$ at most half of the out-edges from $v$ may be monochromatic. A digraph $G$ is majority $k$-colorable if it admits a majority coloring using at most $k$ colors.

It is a well known fact that every simple finite DAG (directed acyclic graph) admits a majority 2-coloring. It can be constructed by sorting the graph topologically, and then coloring the vertices greedily in the reverse topological order.

The definition extends directly to graphs of any cardinality. In \cite{anholcer2020majority} Anholcer, Bosek and Grytczuk prove that every countable DAG is not only majority 4-colorable, but majority 4-choosable. In \cite{Haslegrave_2020} Haslegrave improves their result and shows that every countable DAG is majority 3-choosable, which implies majority 3-colorability. In both \cite{anholcer2020majority} and \cite{Haslegrave_2020} it is conjectured that every countable DAG is majority 2-colorable. Here, we give a counterexample to this.

\section{The counterexample}

\subsection{Construction of the counterexample}

\begin{figure}
    \centering
    \includegraphics[width=0.8\linewidth]{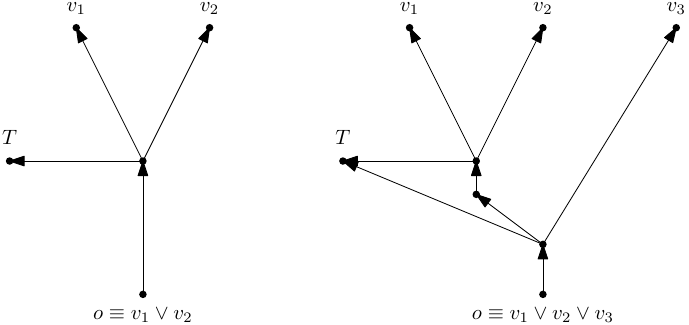}
    \caption{\label{fig:or_gadgets}An OR$^T(v_1, v_2)$ gadget and an OR$^T(v_1, v_2, v_3)$ gadget.}
\end{figure}

Our construction makes use of gadgets called by us the OR gadgets. Intuitively, given a graph $G$ and a vertex $T \in V(G)$, one can interpret a 2-coloring of $G$ as as a truth assignment to the vertices of $G$ based on whether the vertex is colored the same color as $T$ or not. Given $v_1, v_2 \in V(G) - \{T\}$, an OR$^T(v_1, v_2)$ gadget is as defined as shown in Figure \ref{fig:or_gadgets}. \\
Let us refer to the vertex $o$ in Figure \ref{fig:or_gadgets} as an output vertex of the gadget. \\
Let OR$^T(v_1, \dots, v_k)$ denote a chained OR gadget, like exemplified in Figure \ref{fig:or_gadgets}. Output $o$ of such gadget is defined analogously,

Let us now present the counterexample graph $G$. As shown in Figure \ref{fig:counterexample}, it is constructed by taking an infinite directed path $(v_1, v_2, v_3, \dots)$, a vertex $T$, for every $i, j \in \mathbb{N}$ such that $2 \leq i < j$ an OR$^T(v_i, \dots, v_j)$ gadget (later referred to as OR$_{i, j}$), and finally joining the output of OR$_{i, j}$ with an edge starting in $v_{i-1}$.

\begin{figure}
    \centering
    \includegraphics[width=0.8\linewidth]{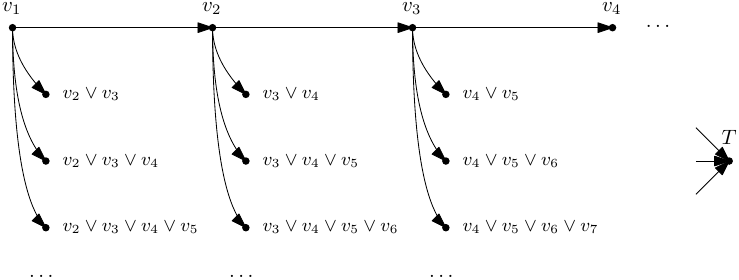}
    \caption{\label{fig:counterexample}A countable acyclic digraph $G$ that we claim not to be majority 2-colorable.}
\end{figure}

\subsection{Proof that 3 colors are required for any majority coloring of $G$.}

An OR$^T(u_1, \dots, u_k)$ gadget consisting of vertices $U$ is considered \emph{valid} in $H$ if there are no outgoing edges between $U - \{T, u_1, \dots, u_k\}$ and $V(G) - U$ in $H$. Note that every OR$_{i, j}$ is valid in $G$. \\
For any 2-coloring $\phi$ of $H$ and any $v, T \in V(G)$, let $\phi^T(v)$ be true iff $\phi(v) = \phi(T)$. We will say that $v$ is colored \emph{true} in $\phi$.

\begin{lemma}\label{lem:outputs}
Given a graph $H = (V, E)$ of any cardinality and any majority 2-coloring $\phi$ of $G$, if $U$ is a valid OR$^T(u_1, \dots, u_k)$ gadget in $H$ with an output vertex $o$, then $\phi^T(u_1) \lor \dots \lor \phi^T(u_k) = \phi^T(o)$.
\end{lemma}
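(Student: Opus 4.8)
The plan is to analyze the OR gadget's structure directly from its definition as shown in the figures, reconstructing what the gadget must look like from the majority-coloring constraints. Since the figure is not reproduced in the excerpt, I would first infer the gadget's internal structure: an $\mathrm{OR}^T(v_1,v_2)$ gadget presumably consists of the two input vertices $v_1,v_2$, the reference vertex $T$, an output vertex $o$, and some auxiliary internal vertices wired so that the out-degree constraints force $\phi^T(o)$ to equal the disjunction $\phi^T(v_1)\lor\phi^T(v_2)$. The key observation driving the whole argument is the \emph{validity} hypothesis: because a valid gadget has no outgoing edges from its internal vertices $U-\{T,u_1,\dots,u_k\}$ to the rest of $H$, the majority condition at each such internal vertex depends \emph{only} on edges inside $U$. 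This locality is what lets us reason about the gadget in isolation even though it sits inside a larger graph $H$.

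First I would handle the base case $k=2$. I would write down, for each internal vertex of the gadget (including $o$), its out-neighbours and the inequality $\#\{\text{bichromatic out-edges}\}\geqslant\#\{\text{monochromatic out-edges}\}$ that $\phi$ must satisfy there. Then I would do a case analysis on the truth values $\phi^T(v_1),\phi^T(v_2)$. In the two cases where at least one input is true, I would show the constraints are only satisfiable when $o$ is colored true, i.e. $\phi^T(o)$ holds; and in the single case where both inputs are false, I would show the constraints force $\phi^T(o)$ false. This establishes the biconditional $\phi^T(v_1)\lor\phi^T(v_2)=\phi^T(o)$ for the two-input gadget.

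Next I would handle the general $k$ by exploiting the chained structure stated in the excerpt: $\mathrm{OR}^T(v_1,\dots,v_k)$ is built by chaining binary OR gadgets, so its output is obtained by feeding the output of one OR gadget as an input to the next. I would argue by induction on $k$, using the $k=2$ case as both the base step and the inductive building block. The crucial point is that each constituent binary gadget in the chain is itself valid in $H$ (its internal vertices still have no edges leaving the overall gadget $U$), so the base lemma applies to each link; chaining the biconditionals $o_{\mathrm{prev}}\lor v_{j}=o_{\mathrm{new}}$ telescopes to $\phi^T(v_1)\lor\dots\lor\phi^T(v_k)=\phi^T(o)$ by associativity of $\lor$.

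The main obstacle I anticipate is purely the bookkeeping of the gadget's internal wiring: since the precise adjacencies live in Figure \ref{fig:or_gadgets} rather than in the text, the delicate part is correctly identifying each internal vertex's out-neighbours and verifying that the majority inequality at that vertex genuinely forces the intended Boolean behavior in every case — in particular ensuring there is no "free" coloring of an internal vertex that would let $o$ be colored inconsistently with the disjunction. I expect the validity hypothesis to be exactly the lever that closes off such escapes, since it guarantees no internal vertex can satisfy its majority constraint by borrowing a bichromatic edge pointing outside $U$.
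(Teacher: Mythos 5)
Your proposal matches the paper's proof essentially step for step: the paper likewise proves the base case $k=2$ by checking all four precolorings of the two inputs relative to $T$ (using validity so that the internal vertices' majority constraints are determined entirely inside $U$), and then handles general $k$ by induction, observing that OR$^T(u_1, \dots, u_l)$ is an OR$^T(o, u_l)$ applied to the output $o$ of OR$^T(u_1, \dots, u_{l-1})$, exactly your telescoping of binary links. The one thing you leave as a plan rather than carry out --- the explicit four-case verification against the gadget's wiring in Figure~\ref{fig:or_gadgets}, which you could not see --- is left at the same level of detail in the paper itself, so this is not a substantive gap.
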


\begin{proof}
Induction on $i$. 

Base case: since there are no outgoing edges from $U - \{T, u_1, u_2\}$ to the rest of the graph, it suffices to check all 4 possible precolorings of $v_1$ and $v_2$ in regard to the color of $T$ and see that in all possible extensions to $U$ the output vertex is colored as stated. 

Induction step: note that OR$^T(u_1, \dots, u_l)$ is in fact an OR$^T(o, u_l)$ on the output $o$ of OR$^T(u_1, \dots, u_{l-1})$. The same reasoning applies.
\end{proof}

\begin{lemma}\label{lem:constraint}
Let $\phi$ be a majority 2-coloring of $G$. For each $i \in \mathbb{N}_{+}$ $\phi^T(v_i)$ is true if and only if for each $j>i$ $\phi^T(v_j)$ is not true.
\end{lemma}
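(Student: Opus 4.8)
The plan is to analyze the majority condition at a single vertex $v_i$, so the first step is to pin down its out-neighborhood. For every $i \geq 1$ the vertex $v_i$ plays the role of $v_{(i+1)-1}$, so its out-edges go to $v_{i+1}$ (the path edge) and to the output $o$ of every gadget $\mathrm{OR}_{i+1,j}$ with $j \geq i+2$. In particular $v_i$ has countably infinite out-degree. By Lemma \ref{lem:outputs} the output of $\mathrm{OR}_{i+1,j}$ is colored true exactly when $\phi^T(v_{i+1}) \lor \dots \lor \phi^T(v_j)$ holds, so I can describe the colors of all out-neighbors of $v_i$ purely in terms of the truth values $\phi^T(v_{i+1}), \phi^T(v_{i+2}), \dots$.

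The second step is to observe what the majority condition means at a vertex of infinite out-degree. Since $v_i$ has $\aleph_0$ out-edges, the defining inequality holds at $v_i$ if and only if infinitely many of its out-edges are bichromatic: if only finitely many were bichromatic then the remaining $\aleph_0$ out-edges would be monochromatic, and a finite number cannot dominate $\aleph_0$; conversely $\aleph_0$ bichromatic edges dominate any count of monochromatic ones. I would then translate chromaticity into the $\phi^T$ language: with two colors an out-edge $v_i u$ is monochromatic iff $\phi^T(u) = \phi^T(v_i)$, so it suffices to compare the truth value of $v_i$ with those of its out-neighbors.

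The crux is a monotonicity observation. Setting $s_j = \phi^T(v_{i+1}) \lor \dots \lor \phi^T(v_j)$, the truth values of the out-neighbors of $v_i$ are exactly $s_{i+1}, s_{i+2}, \dots$ (with $s_{i+1} = \phi^T(v_{i+1})$ being the value of $v_{i+1}$ itself), and this sequence is non-decreasing: once some $\phi^T(v_k)$ is true, every later $s_j$ is true. Hence there are only two possibilities. Either every $\phi^T(v_j)$ with $j > i$ is false, in which case all the $s_j$ are false; or some $\phi^T(v_j)$ is true, in which case $s_j$ is false for only finitely many $j$ and true for infinitely many $j$.

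Combining these steps yields the lemma. In the first case all out-neighbors are colored false, so the majority condition (infinitely many bichromatic out-edges) forces $v_i$ to be colored true. In the second case infinitely many out-neighbors are colored true and only finitely many false, so the majority condition forces $v_i$ to be colored false. Reading this dichotomy off gives precisely that $\phi^T(v_i)$ is true if and only if every $\phi^T(v_j)$ with $j > i$ is false. I expect the main obstacle to be the correct handling of the infinite out-degree: one must argue that the majority inequality collapses to the condition ``infinitely many bichromatic out-edges,'' and then exploit the monotonicity of the $s_j$ so that this single condition cleanly determines the color of $v_i$.
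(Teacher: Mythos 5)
Your proof is correct and takes essentially the same approach as the paper's: both use Lemma \ref{lem:outputs} to read off the truth values of the gadget outputs among the out-neighbors of $v_i$, then split into the same two cases (all later $v_j$ false, forcing $v_i$ true; some later $v_j$ true, giving infinitely many true out-neighbors and only finitely many false ones, forcing $v_i$ false). Your write-up is merely more explicit about why, at a vertex of countably infinite out-degree, the majority inequality reduces to having infinitely many bichromatic out-edges --- a point the paper leaves implicit.
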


\begin{proof}

Let $i \in \mathbb{N}_+$.

$\implies$

Contra-position. If there exists $j > i$ such that $\phi^T(v_j)$ is true, then notice, that the output of OR$_{i+1, k}$ is true for any $k \geqslant j$ as well by Lemma \ref{lem:outputs}. Therefore, among the out-neighbors of $v_i$, infinitely many are colored true, and finitely many are colored false, thus $\phi^T(v_i)$ must be false.

$\impliedby$

If all $v_j$ are colored false for $j > i$, then all outputs of OR$_{i, j}$ must as well be colored false by Lemma \ref{lem:outputs} and therefore $v_i$ must be colored true.

\end{proof}

\begin{theorem}
The graph $G$ is a simple countable acyclic digraph that is not majority 2-colorable. 
\end{theorem}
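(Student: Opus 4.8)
The plan is to combine the two preceding lemmas to derive a contradiction from the assumption that a majority 2-coloring $\phi$ of $G$ exists. First I would observe that the three structural claims needed for the statement split cleanly. That $G$ is countable is immediate: it is built from a countable path, a single vertex $T$, and a countable family of OR gadgets indexed by pairs $(i,j)$ with $2 \le i < j$, each gadget being finite. That $G$ is acyclic should be checked by exhibiting a topological order or verifying that all gadget edges respect the left-to-right ordering of the path together with the orientation internal to each OR gadget; simplicity follows from the construction. These are routine and I would dispatch them in a sentence or two.

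The substantive part is non-2-colorability, and here the heart of the argument is Lemma \ref{lem:constraint}, which I would use as a black box. Suppose toward a contradiction that $\phi$ is a majority 2-coloring of $G$. Applying Lemma \ref{lem:constraint} to the index $i = 1$, the coloring $\phi^T(v_1)$ is true if and only if $\phi^T(v_j)$ is false for every $j > 1$. I would then split into two cases according to the truth value of $\phi^T(v_1)$ and show each leads to an inconsistency with the lemma applied at some larger index.

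If $\phi^T(v_1)$ is true, then by the lemma every $v_j$ with $j > 1$ is colored false; in particular $\phi^T(v_2)$ is false. But applying the $\impliedby$ direction of Lemma \ref{lem:constraint} at $i = 2$, the fact that all $v_j$ with $j > 2$ are false forces $\phi^T(v_2)$ to be true, a contradiction. If instead $\phi^T(v_1)$ is false, then by the $\implies$ direction at $i = 1$ there must exist some $j > 1$ with $\phi^T(v_j)$ true; let $j_0$ be the least such index. Applying the lemma at $i = j_0$ together with the minimality of $j_0$ yields that every larger $v_j$ is false, and then reapplying the $\impliedby$ direction propagates a forced true/false alternation that collides with the choice of $j_0$. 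In short, Lemma \ref{lem:constraint} asserts that the true-colored vertices form an up-set complement that cannot be realized consistently along the infinite path, so no global assignment survives.

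The main obstacle I anticipate is bookkeeping the case analysis cleanly rather than any deep idea: the lemma already encodes the key tension, so the theorem reduces to showing that the biconditional ``$v_i$ is true iff all later $v_j$ are false'' has no solution over $\mathbb{N}_+$. I would phrase this as: at most one $v_i$ can be true (two trues at $i < j$ contradict the lemma at $i$), yet if no $v_i$ is true then the lemma at $i=1$ forces $v_1$ true, and if exactly one $v_i$ is true at index $i_0$ then the lemma at any $i > i_0$ forces that $v_i$ be true as well, again a contradiction. This parity-style impossibility is the clean way to finish, and I would present it in place of the sequential case split if it reads more transparently.
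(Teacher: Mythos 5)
Your proposal is correct and takes essentially the same approach as the paper: the contradiction is derived purely from Lemma \ref{lem:constraint}, by noting that at most one $v_i$ can be colored true, that exactly one true vertex $v_{i_0}$ forces $v_{i_0+1}$ to be true as well, and that no true vertices forces $v_1$ to be true --- which is precisely the paper's closing argument. The only divergence is presentational: the paper devotes roughly half its proof to verifying acyclicity via an explicit lexicographic topological sort (the one place where the back-and-forth edges between the path and the gadgets need genuine checking), whereas you dispatch it in a sentence; your stated method (a topological order respecting the path ordering and gadget-internal orientation) is the same one the paper carries out.
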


\begin{proof}

$G$ is trivially countable. 

$G$ is acyclic. Note that the vertex $T$ has out-degree $0$. We will show a topological sort $\sigma$ of the graph $G-\{T\}$. To each vertex $v$ we will assign a triplet $\sigma(v) = (i, j, k) \in \mathbb{N}^3$ so that for every edge $uv$ of $G$ it is true that $\sigma(u) <_L \sigma(v)$, where $<_L$ denotes the lexicographical ordering. 

Note that each OR$_{i, j}$ used in the construction is acyclic. Let $\sigma_{i, j} : V(\text{OR}_{i, j}) \rightarrow \mathbb{N}$ denote the topological sort of OR$_{i, j}$. 

We assign the triplets as follows:

\begin{itemize}
    \item $\sigma(v_i) = (i, 0, 0)$,
    \item for $v \in V(\text{OR}_{i, j}) - \{v_i : i \in \mathbb{N}_+\}$, $\sigma(v) = (i-1, j, \sigma_{i, j}(v))$.
\end{itemize}

For every edge type, we show that the inequality holds:

\begin{itemize}
    \item for $v_iv_{i+1} \in E(G-\{T\})$, we have 
    
    $\sigma(v_i) = (i, 0, 0) <_L (i+1, 0, 0) = \sigma(v_{i+1})$,
    \item for $v_io$ where $o$ is an output of OR$_{i+1, j}$, we have

    $\sigma(v_i) = (i, 0, 0) <_L (i, j, \sigma_{i+1, j}(o)) = \sigma(o)$,

    \item for $uv_k$ where $u \in V(\text{OR}_{i, j}) - \{v_i : i \in \mathbb{N}\}$ and $k \in [i, j]$, we have

    $\sigma(u) = (i-1, j, \sigma_{i, j}(u)) <_L (k, 0, 0) = \sigma(v_k)$,

    \item for $uw$ where $u, w \in V(\text{OR}_{i, j}) - \{v_i : i \in \mathbb{N}\}$, due to the definition of $\sigma_{i, j}$ we have

    $\sigma(u) = (i-1, j, \sigma_{i, j}(u)) <_L (i-1, j, \sigma_{i, j}(w)) = \sigma(w)$.
\end{itemize}

Every edge of $G-\{T\}$ was considered, and $T$ is a vertex of outdegree $0$, therefore $G$ is acyclic.

Let us assume that a majority 2-coloring $\phi$ of $G$ exists - we will eventually reach a contradiction. \\
From Lemma \ref{lem:constraint} notice that there is at most one vertex among $v_1, v_2, \dots$ colored true. If exactly one vertex $v_k$ is colored true, notice that $v_{k+1}$ also satisfies the constraint of Lemma \ref{lem:constraint}, and therefore should also be colored true. 
Thus, all of $v_1, v_2, \dots$ must be colored false. But in this case, $v_1$ satisfies the constraint of Lemma \ref{lem:constraint}, and therefore should be colored true. \\
This gives us a contradiction to our assumption that a majority 2-coloring of $G$ exists.

\end{proof}

\section{Conclusions and open problems}

We state a problem that may act like a bridge between this result and the Unfriendly Partition Conjecture. 

By a multigraph, we mean an undirected graph $G=(V, E)$ together with a weight function $w : E(G) \rightarrow \mathbb{N}$. A coloring $\phi : V(G) \rightarrow \mathbb{N}$ is called a majority coloring of $G$ if for every vertex $v \in V(G)$

$$\sum_{vu \in E(G) \land \phi(v) \neq \phi(v)} w(u) \geqslant \sum_{vu \in E(G) \land \phi(v) = \phi(v)} w(u) $$

where sigma denotes the limit of a series in the countable case.

Just like finite undirected graphs, by a max-cut argument, finite multi-graphs are majority 2-colorable. Multi-graphs allow us to emulate DAGs to some degree by manipulating the weight function. 
Unfortunately, the presented construction for directed graphs does not directly transfer to multi-graphs.

Nevertheless, we postulate the following.

\begin{conj}    
There exists an undirected countable multi-graph that is not majority 3-colorable. 
\end{conj}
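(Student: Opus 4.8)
The plan is to transport the directed counterexample into the weighted undirected world, arranging that every majority $3$-coloring of the multigraph would induce a majority $2$-coloring of a copy of the digraph $G$, which the theorem above forbids. As before I fix a reference vertex $T$ and read a coloring $\phi$ through the predicate $\phi^T(v)$, calling $v$ \emph{true} when $\phi(v)=\phi(T)$ and \emph{false} otherwise; the new feature is that ``false'' now splits into the two colors distinct from $\phi(T)$, and the whole difficulty is to stop the colorer from exploiting this extra room.

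The central device to build is a weighted \emph{diode} emulating a directed edge $u\to v$: a gadget that contributes to the majority balance at $u$ but leaves the balance at $v$ untouched, so that in the undirected condition $v$ is constrained only by its emulated out-edges, exactly as in $G$. I would attempt this by attaching to $v$ a heavy free pendant (or a matched pair of pendants) of weight exceeding the total weight of all real edges at $v$, whose colors can always be chosen to absorb the contribution of the $uv$-edge to $v$'s side; concretely one reserves one color as a neutral \emph{sink} that the compensating pendants occupy, so that a $3$-coloring of the multigraph projects onto a $2$-coloring of the emulated digraph. In parallel I would rebuild the OR gadget of Lemma~\ref{lem:outputs} in weighted form, hardened against the third color by \emph{color-restriction} sub-gadgets --- heavy pendants in the forbidden color --- that force each logically relevant vertex to use only the two colors $\{\text{true},\text{false}\}$, after which the Boolean behaviour of Lemma~\ref{lem:outputs} can be imported verbatim.

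With these pieces I would assemble the weighted analogue of $G$: an infinite path $v_1,v_2,\dots$, the vertex $T$, a weighted OR gadget $\mathrm{OR}_{i,j}$ for each $2\le i<j$, and a unit-weight emulated edge from $v_{i-1}$ to each output. Taking unit weights on the infinitely many out-edges of each $v_i$ makes the divergent-series comparison in the weighted majority condition behave exactly like the infinite-versus-finite count used in Lemma~\ref{lem:constraint}: if infinitely many outputs at $v_i$ are true then the same-color sum diverges while the opposite sum stays finite, forcing $v_i$ false, and conversely. Lemma~\ref{lem:constraint} and the self-referential descent of the theorem above then reproduce verbatim, yielding a contradiction and hence no majority $3$-coloring.

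The hard part is precisely the decoupling in \emph{every} $3$-coloring, and this is where the ``does not directly transfer'' remark bites. The third color hands the colorer slack that can subvert a naive diode: disagreement at $v$ can be spread across the two non-reference colors, and the compensating pendants must be shown un-cheatable no matter how this is done. Worse, the weights cannot be fixed edge-by-edge in isolation --- each vertex's ballast must dominate its incoming weight, yet those incoming edges are themselves outputs of gadgets carrying their own ballast, so the weight assignment has to be built by an inductive, hierarchically decreasing scheme along the topological order of the theorem above, and one must prove this single assignment simultaneously validates all the diodes and all the color-restriction gadgets across the infinitely many interacting copies. Establishing that global consistency, rather than any single gadget's local correctness, is where I expect the real obstruction to lie.
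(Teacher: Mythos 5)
First, note what you are proving against: the statement is a \emph{conjecture} that the paper explicitly leaves open --- the authors say outright that ``the presented construction for directed graphs does not directly transfer to multi-graphs.'' So there is no paper proof to compare with, and your text is a programme, not a proof: you yourself concede that the decisive step (``the decoupling in every $3$-coloring'') is unestablished. That concession is accurate, and the gap is not merely one of missing detail --- both of your central gadgets fail for concrete reasons. The ``color-restriction'' pendants cannot work: a pendant attached to $v$ by a single edge of weight $w$ must itself satisfy the majority condition, which forces it to take a color \emph{different from whatever color $v$ has}. Hence a pendant never forbids $v$ anything; it always lands on $v$'s bichromatic side and only hands the colorer extra slack. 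More structurally, no gadget can pin a vertex to a specific color, since majority colorings are preserved under any permutation of the colors; constraints can only be relative, and you give no mechanism forcing a vertex to avoid precisely the two-element set $\{\phi(T)\}^{c}$ you need for the Boolean projection. Without that, your plan to ``import Lemma \ref{lem:outputs} verbatim'' collapses: under the projection true/false, two false vertices may carry distinct actual colors, so a bichromatic multigraph edge projects to a monochromatic-false digraph edge, and the weighted majority inequality at a vertex of the multigraph neither implies nor is implied by the digraph inequality for the projected $2$-coloring.

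The diode also fails at exactly the vertices where it matters. Each $v_i$ in the construction has \emph{infinitely many} incident edges (one to $v_{i+1}$ and one to the output of $\mathrm{OR}_{i+1,j}$ for every $j$), so with your unit weights the total incident weight at $v_i$ is infinite and no pendant ``of weight exceeding the total weight of all real edges at $v$'' exists; the paper's weight function takes values in $\mathbb{N}$, so infinite weights are unavailable, and your proposed ``hierarchically decreasing scheme'' cannot descend infinitely within $\mathbb{N}$ while maintaining domination at every level of the nested gadgets. Finally, even if per-vertex domination could be arranged, absorbing incoming weight by free pendants makes $v$'s constraint \emph{weaker}, not equal to the directed one: the same slack that neutralizes emulated in-edges can be spent by an adversarial colorer to violate the emulated out-edge constraint, which is precisely the infinite-versus-finite comparison your analogue of Lemma \ref{lem:constraint} depends on. So the reduction you sketch does not go through as stated, and the conjecture remains open, in agreement with the paper.
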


\section{Acknowledgments}

We thank Stack Exchange users \href{https://math.stackexchange.com/users/238653/jo-bain}{Jo Bain}, \href{https://math.stackexchange.com/users/111012/bof}{bof} and \href{https://math.stackexchange.com/users/71850/alex-ravsky}{Alex Ravsky} for providing a really strong inspiration for this result in \href{https://math.stackexchange.com/questions/1271698/labelings-of-infinite-directed-acyclic-graphs}{this StackExchange conversation} \cite{stackLabellings}.

\bibliographystyle{alpha}
\bibliography{references}

\end{document}